\documentclass[
aps,
pra,
twocolumn,
10pt,
notitlepage,
]{revtex4-2}

\usepackage[T1]{fontenc}
\usepackage[english]{babel}

\usepackage{lmodern}
\usepackage{microtype}
\usepackage{float}

\usepackage{mathtools}
\usepackage{amssymb}
\usepackage{amsthm}

\usepackage{graphicx}

\usepackage{tikz}
\usetikzlibrary{
decorations.pathreplacing,
arrows.meta,
positioning,
calc,
calligraphy,
matrix,
fit
}

\tikzset{
    causal graph/.style={
        >=Stealth,
        scale=0.9,
        transform shape,
        node distance=1.7cm and 2.5cm,
        every node/.style={
            draw,
            circle,
            minimum size=0.75cm,
            inner sep=0pt,
            align=center
        }
    }
}

\newtheorem{theorem}{Theorem}[section]
\newtheorem{definition}{Definition}[section]

\usepackage[
colorlinks=true,
linkcolor=black,
citecolor=black,
urlcolor=black
]{hyperref}

\usepackage[nameinlink]{cleveref}

\begin{document}

\title{Gisin's Argument and the Limits of Causal Explanations in Relativistic Spacetime}
\author{Felix J. Rutzinger}
\noaffiliation

\begin{abstract}
Gisin has provided an argument for the conclusion that no covariant nonlocal model can reproduce the operational statistics observed in Bell experiments. Gisin's argument applies only to deterministic models and we argue that proposed generalizations of the argument beyond determinism based solely on statistical notions are unconvincing, as they lack the resources to formulate a satisfactory notion of no-retrocausality. We introduce the framework of frame-indexed causal models in order to extend Gisin's argument to the genuinely probabilistic case. Frame-indexed causal models associate potentially different causal models with  different reference frames while requiring corresponding operational variables to agree runwise. Within this framework, we establish two no-go results based on Gisin's argument. First, we show that no empirically adequate frame-indexed causal model can jointly satisfy Independent Settings, No-Retrocausality, and Causal Lorentz Invariance. Second, it is established that this result continues to hold when Causal Lorentz Invariance is weakened to Lorentz Invariance of Causal Connections, which requires only the skeleton of the causal structure to remain invariant, thereby allowing the direction of causal influence to depend on the choice of reference frame. These results are discussed in the context of relativistic interpretations of quantum theory and nonclassical causal inference.
\end{abstract}
\maketitle

\tableofcontents

\section{Introduction}\label{sec:I}
The apparent conflict between nonlocality and special relativity has long been a central challenge facing realist approaches to quantum phenomena. This was recognized early on by John Bell, who, in an interview given shortly before his untimely death, remarked the following:
\begin{quote}
    So one of my missions in life is to get people to see that if they want to talk about the problems of quantum mechanics---the real problems of quantum mechanics---they must be talking about Lorentz invariance. \cite{Bell_1988}
\end{quote}
The difficulty of this ``real problem of quantum mechanics'' is reflected by the fact that arguably the first genuinely relativistic realist theory of quantum phenomena, Roderich Tumulka's relativistic GRW model with flash ontology (rGRWf), was not proposed until 2006 \cite{Tumulka_2006}.

Although the conflict between nonlocality and relativity is a pressing foundational problem, the exact nature of this tension is far from clear \cite{ Ghirardi_2010, Maudlin_2011, Duerr_2014, Myrvold_2021,  Allori_2022}. The question whether relativity imposes principled, nontrivial constraints on nonlocal explanations of Bell correlations remains largely open.

A possible answer to this question has been offered by Nicholas Gisin \cite{Gisin_2010, Gisin_2011}. In a 2011 paper, he formulated an impossibility argument which aims to show that ``there is no covariant nonlocal model of quantum correlations'' \cite{Gisin_2011}. Despite its strength, this claim has received limited attention in the literature, with some notable exceptions \cite{Levy_2020, Scarani_2019}.

In this paper, we analyze Gisin's argument and clarify the assumptions involved. We argue that his argument, as formulated in \cite{Gisin_2011}, only applies to deterministic models of Bell experiments. Two proposed extensions of the argument beyond determinism are discussed and found to be unconvincing. The framework of frame-indexed causal models is introduced and shown to provide the adequate resources for formalizing the ideas underlying Gisin's argument. Using frame-indexed causal models, we derive two no-go theorems, thereby extending Gisin's original argument to the genuinely probabilistic case. The first and most straightforward generalization of Gisin's argument via causal models requires an additional invariance assumption on causal structure, not present in the original argument. This condition, called Causal Lorentz Invariance, states that the causal structure is independent of the choice of reference frame. This result is then strengthened by replacing Causal Lorentz Invariance with a strictly weaker invariance condition referred to as Lorentz Invariance of Causal Connections, which states that only the skeleton of the causal structure must be frame-independent. Finally, these no-go theorems are discussed in the context of nonclassical causal inference and realist theories of quantum phenomena in relativistic spacetime, namely relativistic versions of Bohmian mechanics and relativistic collapse theories. In the Appendix, an additional no-go result is derived which does not make use of an invariance assumption on causal structure. Instead it primarily uses a principle called Independent Explanation, which is motivated by the idea that if the same experimental situation is described by differing causal structures in different frames, the experimental outcomes are generated by independent mechanisms. The proof proceeds by showing that Independent Explanation together with additional assumptions forces determinism, thereby allowing Gisin's original argument to go through.

\section{Gisin's Argument}\label{sec:II}
Gisin's argument concerns a bipartite Bell experiment viewed from two different reference frames. In a bipartite Bell experiment, the two experimenters, Alice and Bob, each perform their respective experiments at spacelike separation. Alice chooses her setting $X$ and obtains her outcome $A$, while Bob chooses his setting $Y$ and obtains his outcome $B$. The possible values of the measurement outcomes and settings are denoted by $a \in \{0, \dots, |A|-1\}$, and analogously for the other operational variables $B$, $X$ and $Y$.

In the reference frame $F$, Alice performs her experiment before Bob, meaning that she observes her outcome $A$ even before Bob chooses his setting $Y$. In $F'$, the temporal ordering is reversed, such that Bob performs his experiment before Alice. It is always possible to find two such frames as the experiments are performed at spacelike separation. 

Gisin assumes that the outcomes of the Bell experiment $A$ and $B$ are functions of the settings $X$ and $Y$ and $\Lambda$. 
\begin{align}
    a &= f(x, y, \lambda)\,\\
    b &= g(x, y, \lambda)\,.
\end{align}
The hidden variable $\Lambda$ denotes possible additional influences on the outcomes of the Bell experiment. This assumption restricts the argument to deterministic models of the Bell experiment, since the values of $X$, $Y$ and $\Lambda$ uniquely determine the value of the outcomes $A$ and $B$.

Furthermore, the experimenters are assumed to be free in their choice of measurement setting, meaning that the settings do not depend on the other variables relevant to the Bell experiment, or on variables which share a common cause with the relevant variables.  

Gisin assumes that in a reference frame where the value of a variable is determined before another variable, the first variable may not functionally depend on the second. Hence, in $F$ we have
\begin{align}
    a &= f(x, \lambda)\,,\\
    b &= g(x, y, \lambda)\,,
\end{align}
as in this frame $A$ is determined before $B$ \footnote{Notice that any nonlocal dependence between the outcomes can be absorbed into the response functions without loss of generality: $b = h(a, x, y, \lambda) = h(f(x, \lambda), x, y, \lambda) \equiv g(x, y, \lambda)$.}. Analogously, in $F'$, where the temporal order is reversed, we find that
\begin{align}
    a' &= f'(x', y', \lambda')\,,\\
    b' &= g'(y', \lambda')\,,
\end{align}
where the primed variables denote the variables as seen from $F'$. This assumption is a deterministic formulation of no-retrocausality, the general principle that the past must not depend on the future.

Gisin then argues that in a ``covariant model'' all the primed quantities must be equal to their unprimed counterparts. This allows us to deduce the existence of the following model of the Bell experiment
\begin{align}
    a &= f(x, \lambda)\,,\\
    b &= g'(y, \lambda)\,.
\end{align}
Hence, we can express the operational statistics $p(a, b \mid x, y)$ as 
\begin{equation}
    p(a, b \mid x, y) =  \int \mathrm{d}\lambda \,\delta_{a, f(x, \lambda)}\,\delta_{b, g'(y, \lambda)}\,p(\lambda)\,,
\end{equation}
which is Bell-local. Here, we have made use of the assumption that the experimenters are free to choose their measurement settings, which we take to imply \emph{Measurement Independence} (MI):
\begin{equation}
    p(\lambda \mid x, y) = p(\lambda)\,.
\end{equation} 
For binary-valued variables $|A| = |B| = |X| = |Y| = 2$, this implies the Clauser-Horne-Shimony-Holt (CHSH) inequality \cite{Clauser_1969}
\begin{equation}
    |E_{00} + E_{01} + E_{10} - E_{11}| \leq 2\,,
\end{equation}
where $E_{xy} = \sum_{a,b} (-1)^{a + b}\,p(a, b \mid x, y)$, which is violated experimentally \cite{Giustina_2015} and can be violated by quantum theory up to the Tsirelson bound of $2\sqrt{2}$ \cite{Tsirelson_1980}. Hence, any deterministic model satisfying the conditions described above must be empirically inadequate. 

It should be noted that Gisin's assumption of ``covariance'' is not a substantive assumption when applied to operationally accessible variables. Assume that both Alice and Bob record their respective measurement settings and outcomes physically, for example by writing them on lists. Such a physical record associated with a measurement setting or outcome is not altered by a change of reference frame. Hence, for each run of the experiment, all values of the operational variables $A$, $B$, $X$, and $Y$ must be independent of the choice of reference frame.  

For the hidden variable, it is not necessary to assume that $\lambda = \lambda'$. Instead it suffices to assume a cross-frame version of measurement independence
\begin{equation}
    p(\lambda, \lambda' \mid x, y) = p(\lambda, \lambda')\,,
\end{equation} 
in order to infer Bell locality:
\begin{align}
    p(a, b \mid x, y) &= 
    \int \mathrm{d}\lambda\,\mathrm{d}\lambda' \,\delta_{a, f(x, \lambda)}\,\delta_{b, g'(y, \lambda')}\,p(\lambda, \lambda' \mid x, y)\\
    &= \int \mathrm{d}\lambda\,\mathrm{d}\lambda' \,\delta_{a, f(x, \lambda)}\,\delta_{b, g'(y, \lambda')}\,p(\lambda, \lambda')\,.
\end{align}
Gisin concludes from this argument that covariant and nonlocal explanations of Bell experiments are impossible \cite{Gisin_2011}. This rather strong conclusion does not follow from the argument presented above. We have seen that Gisin's reasoning relies on three substantial assumptions: first, that the experimenters are free to choose their measurement settings independently of the other variables relevant to the Bell experiment; second, that variables cannot retrocausally depend on variables in their future; third, and most importantly for what follows, that the Bell experiment is correctly described by a deterministic model.

\section{Probabilistic Generalizations}\label{sec:III}
\subsection{Deterministic Simulation}\label{subsec:IIIA}
The argument, as formulated above, is restricted to the deterministic case as it assumes that the outcomes are, for all runs of the experiments, functions of the settings and the hidden variable. Gisin suggests extending the argument by  simply adding additional random variables \cite{Gisin_2011}. In \cite{Esfeld_2013}, Esfeld and Gisin rightly observe that every probabilistic behavior can be deterministically simulated in this way.

This is exactly the same strategy Conway and Kochen propose for converting any stochastic model into a deterministic one, in the context of their ``Free Will Theorem'':
\begin{quote}
    [L]et the stochastic element [\dots] be a sequence of random numbers [\dots]. Although these might only be generated as needed, it will plainly make no difference to let them be given in advance. \cite{Conway_2006} 
\end{quote}
Goldstein et al. argued that this is not enough. One further needs to check whether the assumptions of the relevant no-go argument are true, or at least convincing, in the genuinely stochastic case, something which needs to be established on its own merits \cite{Goldstein_2010}. 

As an example to illustrate this point, they cite the case of Outcome versus Parameter Independence \cite{Goldstein_2010}. Experimental violations of the CHSH inequality show that any empirically adequate behavior describing a Bell experiment must violate Bell locality. Bell locality is jointly implied by \emph{Outcome Independence} (OI)
\begin{equation}
    p(a, b \mid x, y, \lambda) = p(a \mid x, y, \lambda)\,p(b \mid x, y, \lambda)\,
\end{equation}
\emph{Parameter Independence} (PI)
\begin{align}
    p(a \mid x, y, \lambda) &= p(a \mid x, \lambda)\,,\\
    p(b \mid x, y, \lambda) &= p(b \mid y, \lambda)\,,
\end{align}
and \emph{Measurement Independence} (MI)
\begin{equation}
    p(\lambda \mid x, y) = p(\lambda)\,.
\end{equation}
Under the assumption of MI, the question arises whether OI or PI or both fail. This is a nontrivial issue which has led to a sizable literature \cite{Jarrett_1984, Chaves_2015, Ringbauer_2016, Vieira_2025}. 

Since deterministic models satisfy OI trivially, any empirically adequate deterministic model of a Bell experiment that respects MI must violate PI. Hence, by reasoning analogous to that of Esfeld and Gisin, one would have to conclude that probabilistic models, given that they are empirically adequate and satisfy MI, must violate PI as well. However, this is not generally true for probabilistic theories, as there are realist interpretations of quantum theory, for example the objective collapse models such as rGRWf, which respect PI while violating OI \cite{Tumulka_2009}. 

In the following we will consider possible probabilistic formulations of Gisin's argument and argue that the assumption excluding retrocausal influences cannot be satisfactorily formulated using only statistical notions.

\subsection{Ontological Models}\label{subsec:IIIB}
To generalize Gisin's argument beyond the deterministic case one needs to express the relevant assumptions in a genuinely probabilistic framework. This approach has been proposed by Levy and Hemmo \cite{Levy_2020, Levy_2025}. They have tried to recast Gisin's argument using their own probabilistic framework. In the following, the well-known framework of ontological models is used instead \cite{Harrigan_2010, Leifer_2014}.  

It is possible to generalize Gisin's argument in this way, as has been done in the Appendix, but the result does not amount to an interesting no-go theorem. The difficulty lies in the fact that it is not clear how the condition of no-retrocausality can be formulated in a reasonable way using only statistical notions. The idea behind any principle of no-retrocausality seems to be that it should prohibit influence from the future to the past. A naive way of trying to express this idea using ontological models is the following: 
\begin{definition}[Probabilistic No-Retrocausality]
    All variables are conditionally independent of variables in their future, given a full specification of their past.
\end{definition}
Using \emph{Probabilistic No-Retrocausality} (PNRC), the factorization $p(a, b \mid x, y, \lambda) = p(a \mid x, \lambda)\,p(b \mid y, \lambda)$ can be derived by again considering reference frames with different time orders. In $F$, PNRC implies that
\begin{equation}
    p(a \mid b, x, y, \lambda) = p(a \mid x, \lambda)\,,
\end{equation}
as $B$ and $Y$ are in the future of $A$ while $A$'s past is fully specified by $X$ and $\Lambda$. In $F'$
\begin{equation}
    p(b \mid x, y, \lambda) = p(b \mid y, \lambda)\,,
\end{equation}
must obtain as $X$ is in the future of $B$, while $B$'s past is fully specified by $Y$ and $\Lambda$. If one further assumes that statistical independence relations are frame-independent together with MI, then the resulting operational statistics $p(a, b \mid x, y)$ must be Bell-local. 

To see why PNRC is unconvincing consider a simple Markov chain consisting of three random variables $X_0, X_1, X_2$, their index corresponding to their temporal order. These three random variables describe the state of some physical system and jointly model its time evolution.
\begin{figure}[H]
\begin{center}
\begin{tikzpicture}[causal graph]

\node (x0) {$X_0$};
\node[right=of x0] (x1) {$X_1$};
\node[right=of x1] (x2) {$X_2$};

\draw[->] (x0) -- (x1);
\draw[->] (x1) -- (x2);

\end{tikzpicture}
\end{center}
\caption{A simple Markov chain representing the time evolution of a physical system.}
\end{figure}
PNRC implies that $X_1$ is independent of $X_2$ given $X_0$
\begin{equation}
    p(x_1, x_2 \mid x_0) = p(x_1 \mid x_0)\,p(x_2 \mid x_0)\,,
\end{equation}
while the Markov condition implies 
\begin{equation}
    p(x_1, x_2 \mid x_0) = p(x_1 \mid x_0)\,p(x_2 \mid x_1)\,.
\end{equation}
From this it follows that 
\begin{equation}\label{eq:1}
    p(x_2 \mid x_0) = p(x_2 \mid x_1)\,,
\end{equation}
meaning that $X_1$ contains no information about $X_2$ beyond that already contained in $X_0$. But this is exactly the kind of condition which one expects to fail in physical theories which exhibit genuinely stochastic time evolution. In these theories, new information, which then influences the evolution of the system, may enter with time. Hence, this version of no-retrocausality is generally incompatible with genuinely stochastic physical theories whose dynamics involve the creation of new information over time \footnote{There may exist stochastic physical theories which respect equation \eqref{eq:1} but they are generally rather unconvincing. One could for example have a theory in which the physical state at all times is independent of the physical state at all other times.}. Hence, any generalization of Gisin's argument beyond deterministic theories which makes use of this version of no-retrocausality is uninteresting. 

One may try to avoid these problems by using a more sophisticated version of PNRC while still remaining inside the ontological models framework. Such a condition has been offered, although in a context not directly related to Gisin's argument, by Leifer and Pusey \cite{Leifer_2017}. They distinguish between input ($X$, $Y$, $\Lambda$) and noninput variables ($A$, $B$) to avoid the problems discussed above:
\begin{quote}
    All noninput variables are conditionally independent of input variables in their future, given a full specification of their past.
\end{quote}
The problem with this notion for present purposes is that it is too weak to derive Bell locality. It allows us to derive PI but not OI because neither $A$ nor $B$ is an input variable. Hence, this sophisticated probabilistic notion of no-retrocausality does not imply the factorization
\begin{equation}
    p(a, b \mid x, y, \lambda) = p(a \mid x, \lambda)\,p(b \mid y, \lambda)\,,
\end{equation}
which, together with MI, would imply Bell locality. 

In the rest of the paper we show that these issues can be overcome by making use of methods borrowed from the field of causal inference. Frame-indexed causal models allow no-retrocausality to be expressed as a constraint on causal structure and not on statistical independence relations, thereby avoiding the difficulties noted above. 

\section{Causal Model Formulations}\label{sec:IV}
\subsection{An Introduction to Frame-Indexed Causal Models}\label{subsec:IVA}
In the following we consider causal explanations of the same physical experiment relative to different reference frames. In order to give a precise account of how potentially frame-dependent causal explanations of the same experiment relate to each other, the notion of a frame-indexed causal model is introduced. Before doing so, relevant notions from causal inference are briefly reviewed. Here, we will mostly follow the presentation in \cite{Wood_2015}.

A \emph{causal structure} $G$ is a directed acyclic graph (DAG) $(\mathbf{V}, E)$ where $\mathbf{V}$ is a finite set of random variables $\{X_1, \dots, X_n\}$ and $E \subseteq \mathbf{V} \times \mathbf{V}$ is a set of ordered pairs of distinct variables $(X_i, X_j)$, specifying that $X_i$ is a direct cause of $X_j$ in $G$. The graph $G$ is directed as each edge has an orientation and it is acyclic because it is assumed to contain no directed paths that begin and end at the same vertex.

If $X_j$ is a direct cause of $X_i$ then $X_j$ is called a \emph{parent} of $X_i$. The set of all parents of $X_i$ is denoted by $\mathrm{Pa}(X_i)$. An important distinction is that between variables which are operationally accessible and those which are latent. In the context of the Bell scenario, the former will as usual be denoted by Latin letters such as $A$, $B$, $X$, and $Y$, whereas the latter will be denoted by Greek letters such as $\Lambda$. Furthermore, the measurement outcomes are denoted by letters from the beginning of the alphabet, such as $A$ and $B$, while the measurement settings are denoted by letters from the end of the alphabet, such as $X$ and $Y$.

A \emph{causal model} $C$ is a pair $(G,P)$, where $G$ is a causal structure and $P$ is a joint probability distribution over the variables in $\mathbf{V}$ which is Markov relative to $G$, meaning that
\begin{equation}
    P(X_1,\ldots,X_n) = \prod_{i=1}^{n} P(X_i\mid\mathrm{Pa}(X_i))\,.
\end{equation}
We use uppercase $P$ to denote probability distributions over random variables, and lowercase $p$ for the probabilities of particular values of random variables, or probability densities in the case of continuous variables. Thus,
\begin{equation}
    p(x_1,\ldots,x_n)
    =
    P(X_1=x_1,\ldots,X_n=x_n)\,.
\end{equation}
In order to allow for causal explanations that may depend on the choice of reference frame, we employ the framework of \emph{frame-indexed causal models}, formally developed in Appendix. The basic idea is to associate with each inertial reference frame $F$ a causal model
\begin{equation}
    C^F = (G^F, P^F)\,,
\end{equation}
defined over a corresponding set of frame-indexed random variables $\mathbf{V}^F$. Variables with the same index in different frames are localized in the same spacetime region $R_i$ of Minkowski spacetime. This allows the temporal relations between variables to be evaluated relative to each reference frame and thereby enables us to apply the principle of no-retrocausality. The idea of embedding random variables in spacetime has been explored more in depth by Vilasini and Colbeck \cite{Vilasini_2022}. Furthermore, the distinction between nonlatent and latent variables is assumed to be independent of reference frame. 

The set of all frame-indexed variables
\begin{equation}
    \mathbf{V} = \bigcup_{j = 1}^{m}\mathbf{V}^{F_j}\,,
\end{equation}
is assumed to admit a joint probability distribution $P(\mathbf{V})$ across the different causal models. For each frame $F$, the distribution $P^F$ is the marginal of $P$ over the variables in $\mathbf{V}^F$. This allows probabilistic relations between different frame-indexed descriptions to be formulated without identifying the corresponding random variables from the outset. Since the models describe the same physical experiment, corresponding operational variables are required to agree in each individual run. Thus, for any two frames $F$ and $F'$,
\begin{equation}
    P\left(X_i^F = X_i^{F'}\right) = 1\,,
\end{equation}
for every nonlatent variable $X_i$. We refer to this condition as \emph{Runwise Agreement}. It is stronger than observational equivalence, since it requires agreement of the individual outcomes rather than merely agreement of their probability distributions.

Crucially, the framework itself imposes no constraint on how the causal structure $G^F$ depends on the choice of reference frame. Different assumptions concerning this dependence will be considered in the following.

\subsection{Causal Principles}\label{subsec:IVB}
The operationally accessible variables $A, B, X, Y$ of the Bell experiment, together with the hidden variable $\Lambda$, are associated with nodes of a causal structure. The hidden variables are assumed to include all latent causal influences on the operational variables of the Bell experiment. 

The following arguments proceed by the usual strategy: the assumptions restrict the set of causal structures compatible with the Bell experiment. By the Markov condition, this restricts the set of compatible joint probability distributions to distributions that are empirically inadequate. Consequently, at least one of the causal assumptions must fail.

In the case of the unsuccessful ontological models versions of Gisin's argument, one needs to assume an invariance condition on statistical independence relations. Similarly for causal models, the most straightforward generalizations of Gisin's argument make use of an invariance condition on causal structure. In the following, this assumption will be successively relaxed. Therefore the relevant invariance conditions will be discussed jointly with the corresponding no-go arguments. 

All formulations of the argument rely on the assumption of \emph{Independent Settings} (IS), which enforces the freedom of experimenters to choose their settings independently of the experiment they are performing: 
\begin{definition}[Independent Settings]
The measurement settings have no relevant causes.
\end{definition}
This means that the measurement settings $X$ and $Y$ do not have causes among, nor share a common cause with, any of the other variables relevant to the Bell experiment in all frames  \cite{Ying_2024}.

The condition of \emph{No-Retrocausality} (NRC), which excludes causal influence from the future to the past, is defined as follows:
\begin{definition}[No-Retrocausality]
    No variable can be caused by a variable in its future.
\end{definition}
This assumption is taken to apply to each reference frame individually. This means that for any reference frame $F$, if $X_j^F$ is in the future of $X_i^F$ in $F$, then $X_j^F$ cannot be a parent of $X_i^F$. As with IS, NRC is assumed throughout.

\subsection{Causal Lorentz Invariance}\label{subsec:IVC}
The most straightforward way of formulating Gisin's argument using frame-indexed causal models is to assume, in conjunction with IS and NRC, that the causal structure is independent of the choice of reference frame. This condition will be referred to as \emph{Causal Lorentz Invariance} (CLI):
\begin{definition}[Causal Lorentz Invariance]
    The causal structure is identical in all reference frames.
\end{definition}
This principle gives a transformation rule for causal structure lacking from the framework of frame-indexed causal models introduced above. It states that the causal structure transforms trivially under a change of Lorentz frames. Using CLI, the following no-go result can be established:
\begin{theorem}\label{theorem:1}
    There is no empirically adequate, frame-indexed causal model of the Bell scenario that satisfies:
    \begin{itemize}
        \item[(i)] Independent Settings,
        \item[(ii)] No-Retrocausality,
        \item[(iii)] Causal Lorentz Invariance.
    \end{itemize}
\end{theorem}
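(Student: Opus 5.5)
The plan is to argue by contradiction. Suppose a frame-indexed causal model $\{C^F\}$ of the Bell scenario satisfies (i)--(iii). We use the three assumptions to show that the causal structure common to all frames can contain only the edges of the standard Bell-local DAG. We then use the Markov condition to show that the observed statistics $p(a,b\mid x,y)$ obey the CHSH inequality. This contradicts empirical adequacy, which requires reproducing a CHSH violation as observed experimentally and predicted up to $2\sqrt{2}$ by quantum theory.

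\textbf{Step 1 (fix the frames and use NRC).} Choose two frames $F$ and $F'$ with reversed time orders, as in \cref{sec:II}. In $F$, Alice's setting and outcome regions lie before Bob's. In $F'$, Bob's lie before Alice's. By NRC applied in $F$, neither $B^F$ nor $Y^F$ can be a parent of $A^F$ or $X^F$. By NRC applied in $F'$, neither $A^{F'}$ nor $X^{F'}$ can be a parent of $B^{F'}$ or $Y^{F'}$. By CLI, $G^F = G^{F'} =: G$ under the identification of equally indexed nodes. The edges $B\to A$, $Y\to A$, $A\to B$ and $X\to B$ are each excluded in one of the two frames, so all four are absent from $G$. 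Since $A$ precedes $X$'s consequences only trivially, we should also check that the only remaining possible edges between operational nodes are $X\to A$ and $Y\to B$. Edges into $X$ or $Y$ are excluded by IS.

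\textbf{Step 2 (latent structure and IS).} The hidden variables $\Lambda$ collect all latent causes of the operational variables. IS says $X$ and $Y$ have no causes and share no common cause with $\Lambda$, $A$ or $B$. Hence in $G$ the settings are root nodes that are d-separated from $\Lambda$. Taking $\Lambda$ as the joint latent node, $G$ is a subgraph of the Bell DAG, in which $\Lambda\to A$, $\Lambda\to B$, $X\to A$ and $Y\to B$ are the only edges. Possible edges from $A$ or $B$ into latent nodes cause no trouble, because such latent nodes have no operational descendants that would matter, or they can be marginalized as non-ancestors of the relevant variables. The Markov condition for $P^F$ relative to $G$ then gives
\begin{equation*}
p(a,b\mid x,y)=\int \mathrm{d}\lambda\, p(a\mid x,\lambda)\,p(b\mid y,\lambda)\,p(\lambda),
\end{equation*}
which is Bell-local and therefore satisfies CHSH. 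Runwise Agreement ensures that these frame-$F$ statistics coincide with the operational statistics, which are frame-independent.

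\textbf{Main obstacle.} The delicate point is Step 2's handling of latent variables. Latent nodes may sit at various spacetime regions, and some may be effects of $A$ or $B$ with arrows back into the other outcome. For example, $A\to\mu\to B$ with $\mu$ latent is excluded in neither frame unless $\mu$'s location is constrained. I would handle this by noting that NRC in $F'$ forbids any directed path from $A$ to $B$. A path from $A$ to $B$ through a latent $\mu$ requires $\mu$ to lie after $A$ in $F'$ and before $B$ in $F'$. But $B$ precedes $A$ in $F'$, and time order along directed paths is transitive, so no such $\mu$ exists. The same argument in $F$ forbids directed paths from $B$ to $A$, and likewise from $Y$ to $A$ and from $X$ to $B$. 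The remaining latents are then either common ancestors, which are absorbed into $\Lambda$, or non-ancestors of the operational nodes, which are marginalized. This yields the factorization above.
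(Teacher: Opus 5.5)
Your proposal is correct and follows the paper's route: NRC in the two oppositely ordered frames removes $B\to A$ and $Y\to A$ (in $F$) and $A\to B$ and $X\to B$ (in $F'$), CLI forces the common graph to lack all four, IS makes the settings independent roots, and the Markov condition gives the Bell-local factorization and hence CHSH. Your directed-path argument for intermediate latent nodes is an extra refinement the paper does not need, since it takes $\Lambda$ to be a single node collecting all latent influences; note that latents which are ancestors of only one outcome (e.g.\ $X\to\mu\to A$) are neither common ancestors nor non-ancestors, so they fit neither of your two classes, but they are harmlessly absorbed into $p(a\mid x,\lambda)$ or $p(b\mid y,\lambda)$.
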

\begin{proof}
The Bell experiment is assumed to be localized in spacetime as discussed in Section~\ref{sec:II}.

IS implies that the measurement settings $X$ and $Y$ have no relevant causal parents. This implies that the latent variable $\Lambda$ can at most act as a common cause of the outcomes $A$ and $B$. Furthermore, IS excludes causal influence from the outcomes to the settings and direct causal relations between the settings. This leaves us with the following class of allowed causal structures, independent of the choice of reference frame:
\begin{figure}[H]
\centering
\begin{tikzpicture}[causal graph]

\node (a) at (0,0) {$A$};
\node (b) at (2.5,0) {$B$};
\node (x) at (0,-1.7) {$X$};
\node (y) at (2.5,-1.7) {$Y$};
\node (lambda) at (1.25,-3.4) {$\Lambda$};

\draw[-] (a) -- (b);
\draw[->] (x) -- (a);
\draw[->] (y) -- (b);
\draw[->] (x) -- (b);
\draw[->] (y) -- (a);
\draw[->] (lambda) -- (a);
\draw[->] (lambda) -- (b);

\end{tikzpicture}
\caption{The class of causal structures underlying the Bell experiment compatible with IS.}
\end{figure}
Here, the straight line between $A$ and $B$ represents the fact that both $A \rightarrow B$ and $B \rightarrow A$ are compatible with IS.

To apply NRC, we must consider the causal structure from different reference frames. As before, in reference frame $F$, Alice performs her experiment before Bob, while in frame $F'$, the temporal order is reversed. In $F$, NRC rules out $B$ and $Y$ as causes of $A$, since $B$ is observed and $Y$ is chosen only after $A$ has been determined, whereas in $F'$, $A$ and $X$ cannot cause $B$ for analogous reasons. This leaves us with the following allowed causal structures in $F$ and $F'$, respectively:
\begin{figure}[H]
\centering
\begin{minipage}{0.48\columnwidth}
\centering
$F$\\[0.8ex]
\begin{tikzpicture}[causal graph]
 
  \node[draw, circle, minimum size=0.75cm, inner sep=0pt] (a) at (0,0) {$A$};
  \node[draw, circle, minimum size=0.75cm, inner sep=0pt] (b) at (2.5,0) {$B$};
  \node[draw, circle, minimum size=0.75cm, inner sep=0pt] (x) at (0,-1.7) {$X$};
  \node[draw, circle, minimum size=0.75cm, inner sep=0pt] (y) at (2.5,-1.7) {$Y$};
  \node[draw, circle, minimum size=0.75cm, inner sep=0pt] (lambda) at (1.25,-3.4) {$\Lambda$};

  \draw[->] (a) -- (b);
  \draw[->] (x) -- (a);
  \draw[->] (x) -- (b);
  \draw[->] (y) -- (b);
  \draw[->] (lambda) -- (a);
  \draw[->] (lambda) -- (b);
  
\end{tikzpicture}
\end{minipage}
\hfill
\begin{minipage}{0.48\columnwidth}
\centering
$F'$\\[0.8ex]
\begin{tikzpicture}[causal graph]

  \node[draw, circle, minimum size=0.75cm, inner sep=0pt] (a) at (0,0) {$A'$};
  \node[draw, circle, minimum size=0.75cm, inner sep=0pt] (b) at (2.5,0) {$B'$};
  \node[draw, circle, minimum size=0.75cm, inner sep=0pt] (x) at (0,-1.7) {$X'$};
  \node[draw, circle, minimum size=0.75cm, inner sep=0pt] (y) at (2.5,-1.7) {$Y'$};
  \node[draw, circle, minimum size=0.75cm, inner sep=0pt] (lambda) at (1.25,-3.4) {$\Lambda'$};

  \draw[->] (b) -- (a);
  \draw[->] (x) -- (a);
  \draw[->] (y) -- (b);
  \draw[->] (y) -- (a);
  \draw[->] (lambda) -- (a);
  \draw[->] (lambda) -- (b);

\end{tikzpicture}
\end{minipage}
\caption{The maximally allowed causal structure underlying the Bell experiment compatible with IS and NRC in $F$ and $F'$.}
\label{figure:1}
\end{figure}
CLI requires the causal structure to be identical in all reference frames. Therefore, if a causal influence is ruled out in one frame, it is ruled out in all frames. The only allowed arrows are consequently those present in both frames, which rules out any direct causal influence between the parties and leaves the Bell-local causal structure shown below:
\begin{figure}[H]
\centering
\begin{tikzpicture}[causal graph]

\node (a) at (0,0) {$A$};
\node (b) at (2.5,0) {$B$};
\node (x) at (0,-1.7) {$X$};
\node (y) at (2.5,-1.7) {$Y$};
\node (lambda) at (1.25,-3.4) {$\Lambda$};

\draw[->] (x) -- (a);
\draw[->] (y) -- (b);
\draw[->] (lambda) -- (a);
\draw[->] (lambda) -- (b);

\end{tikzpicture}
\caption{The maximally allowed causal structure underlying the Bell experiment compatible with IS, NRC, and CLI.}
\end{figure}
Applying the Markov condition, we obtain that the operational statistics must obey Bell locality:
\begin{equation}
    p(a, b \mid x, y) = \int \mathrm{d}\lambda\, p(a \mid x, \lambda)\,p(b \mid y, \lambda)\,p(\lambda)\,,
\end{equation}
which implies the CHSH inequality
\begin{equation}
    |E_{00} + E_{01} + E_{10} - E_{11}| \leq 2\,,
\end{equation}
for binary-valued nonlatent variables $|A| = |B| = |X| = |Y| = 2$. Since the CHSH inequality is experimentally violated, any causal model jointly satisfying IS, NRC, and CLI is unable to reproduce the operational statistics and is therefore empirically inadequate. 
\end{proof}

\subsection{Lorentz Invariance of Causal Connections}\label{subsec:IVD}
A natural relaxation of Causal Lorentz Invariance is obtained by only requiring \emph{Lorentz Invariance of Causal Connections} (LICC). It can be stated as follows:
\begin{definition}[Lorentz Invariance of Causal Connections]
    The skeleton of the causal structure is identical in all reference frames.
\end{definition}
LICC allows the direction of causal arrows to depend on the choice of reference frame while enforcing that the skeleton of the causal structure, i.e., the undirected graph, remains invariant. 

One motivation for LICC stems from the idea that the usual frame-independent notion of causality may no longer generally apply in genuinely relativistic and stochastic theories of quantum phenomena, such as rGRWf. Tumulka, for example, has commented on this issue in a discussion of the ``Free Will Theorem'' \footnote{As discussed in the Conclusion, Tumulka at other times suggests there is no need for causal notions at the level of fundamental physics. Hence, his position may not amount to a rejection of CLI in favor of LICC but rather to a wholesale rejection of the applicability of causal notions to the Bell scenario.}:
\begin{quote}
    Conway and Kochen have even introduced a word for the frame-dependence of influences: effective causality. [\dots] Thus, the ``causality principle'' means that one event cannot influence an earlier one. [\dots] A theory is effectively causal if in each frame the causality principle holds. This is the case, for example, in rGRWf, because whether $f_A$ influences $f_B$ [here $f_A$ or $f_B$ denote flashes] or vice versa is frame-dependent. That is, rGRWf is effectively causal despite the nonlocality. \cite{Tumulka_2007}
\end{quote}
While LICC is strictly weaker than CLI, it, in conjunction with NRC, still provides a transformation rule for causal structure for almost all inertial frames. LICC enforces that the skeleton of the causal structure remains invariant. NRC fixes the direction of influence in every frame, except for those frames where causally connected variables are determined simultaneously. Furthermore, we can use LICC to establish the following strictly stronger no-go theorem:
\begin{theorem}\label{theorem:2}
 There is no empirically adequate, frame-indexed causal model of the Bell scenario that satisfies:
    \begin{itemize}
        \item[(i)] Independent Settings,
        \item[(ii)] No-Retrocausality,
        \item[(iii)] Lorentz Invariance of Causal Connections.
    \end{itemize}
\end{theorem}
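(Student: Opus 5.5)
The plan is to reduce Theorem~\ref{theorem:2} to the situation in the proof of Theorem~\ref{theorem:1}. I will show that, under IS and NRC, LICC already forces the skeleton to be the Bell-local one. Once the skeleton contains no edge between Alice's and Bob's wings, orientations no longer matter, and the Markov condition gives Bell locality in any frame.

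First, IS is applied frame by frame exactly as in the proof of Theorem~\ref{theorem:1}. In every frame the settings have no relevant parents, so the only possible edges are $X\to A$, $Y\to B$, $\Lambda\to A$, $\Lambda\to B$, together with the cross-wing edges $X\to B$, $Y\to A$, and an edge between $A$ and $B$ of either orientation. Now pass to skeletons. The undirected edges $\{X,B\}$, $\{Y,A\}$ and $\{A,B\}$ are the only candidates for nonlocality, and LICC says each is present in all frames or in none.

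Second, NRC kills each such edge in some frame. Take $F$, in which Alice's region precedes Bob's, and $F'$, in which it is the reverse. Suppose $\{X,B\}$ is in the skeleton. In $F'$ it must then be oriented. The orientation $B'\to X'$ is excluded by IS, since settings have no relevant causes. The orientation $X'\to B'$ is excluded by NRC, since $X'$ lies in the future of $B'$ in $F'$. Hence $\{X,B\}$ is absent in $F'$, and by LICC it is absent in every frame. Symmetrically, $\{Y,A\}$ is excluded using $F$. The remaining edge $\{A,B\}$ is the key one. In $F$, NRC forbids $B\to A$, so the edge would have to be $A\to B$. In $F'$, NRC forbids $A'\to B'$, so it would have to be $B'\to A'$. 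This orientation flip is exactly what LICC permits, so the edge is not yet excluded, and I expect this to be the main obstacle.

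To remove the $A$–$B$ edge, I would use spacelike separation more carefully. Alice's measurement region $R_A$ and Bob's region $R_B$ are spacelike separated, so there is an inertial frame $F''$ in which the outcome events are simultaneous. Alternatively, one can choose frames in which $A$ is determined before $Y$ is chosen and $B$ is determined before $X$ is chosen, and then refine the localization of the regions. In $F''$, NRC read as ``no variable can be caused by a variable not in its past'' forbids both orientations, so the skeleton edge cannot be oriented in $F''$. LICC then removes it from all frames. If one insists that NRC only forbids causes strictly in the future, this step needs a further argument. I would first try a continuity argument: the set of frames where $A$ precedes $B$ and the set where $B$ precedes $A$ are separated by the simultaneity frames, and a DAG requires an orientation there. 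I would make explicit that this is where spacelike separation of the localized regions $R_i$ enters.

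With all cross-wing skeleton edges excluded, every frame's DAG is the Bell-local structure of Theorem~\ref{theorem:1}. Runwise Agreement identifies the operational variables across frames. The Markov condition together with IS ($\Lambda$ independent of $X,Y$) then gives $p(a,b\mid x,y)=\int \mathrm{d}\lambda\, p(a\mid x,\lambda)p(b\mid y,\lambda)p(\lambda)$, hence the CHSH inequality, which contradicts empirical adequacy.
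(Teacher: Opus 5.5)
Your treatment of the cross-wing edges $\{X,B\}$ and $\{Y,A\}$ matches the paper's. The step that is supposed to delete the $\{A,B\}$ edge fails, however, and the intermediate claim that IS, NRC and LICC force a Bell-local skeleton is false.

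NRC as stated only forbids a variable from being caused by one \emph{in its future}. In a frame where $A$ and $B$ are simultaneous it therefore forbids neither orientation. The paper says this explicitly: NRC fixes directions ``except for those frames where causally connected variables are determined simultaneously''. Reading NRC as ``not in its past'' is an extra hypothesis that is not in the theorem. It would also rule out exactly the frame-dependent outcome-to-outcome influence that LICC is designed to admit. Your continuity argument has nothing to act on either: the framework imposes no continuity on how $G^F$ varies with $F$, and the finite frame set $\mathcal{F}$ need not contain a simultaneity frame at all.

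In fact the edge is realizable. Take a uniform bit $\lambda$ and set $a=x\oplus\lambda$, $b=a\oplus xy$ in $F$. This uses the arrows $X\to A$, $\Lambda\to A$, $\Lambda\to B$, $Y\to B$, $A\to B$, and no $X\to B$. In $F'$ take the mirror model $b'=y'\oplus\lambda'$, $a'=b'\oplus x'y'$. Couple the two by identifying the operational variables and taking $\Lambda,\Lambda'$ conditionally independent given them. This satisfies IS, NRC, LICC and Runwise Agreement, yet $a\oplus b=xy$ is a PR box with CHSH value $4$. So the edge genuinely survives, and CHSH cannot be your final step.

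The missing idea is to keep the $A$--$B$ edge and show that even this structure is empirically inadequate. The paper uses the Markov condition to write $p(a,b\mid x,y)=\int \mathrm{d}\lambda\, p(a\mid x,\lambda)\,p(b\mid a,y,\lambda)\,p(\lambda)$ in $F$, and the mirrored factorization in $F'$; the operational statistics coincide by Runwise Agreement. It then invokes the result of Chaves et al.\ and Ringbauer et al.: for $|A|=|B|=2$ and $|X|=|Y|=3$, such models obey $E_{00}-E_{02}-E_{11}+E_{12}-E_{20}+E_{21}\le 4$.

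This inequality is invariant under exchanging the parties, so it constrains both orientations. It has been violated experimentally, with a value of $5.16\pm0.02$. The example above shows heuristically why three settings are essential: given $\lambda$, Alice's single outcome bit can relay a binary $x$ to Bob, but not a ternary one.
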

\begin{proof}
As in the proof of Theorem~\ref{theorem:1}, we constrain the allowed causal structures using IS and NRC, by considering them in $F$ and $F'$, respectively. Again, this leads to the following pair of causal structures:
\begin{figure}[H]
\centering
\begin{minipage}{0.48\columnwidth}
\centering
$F$\\[0.8ex]
\begin{tikzpicture}[causal graph]
 
  \node[draw, circle, minimum size=0.75cm, inner sep=0pt] (a) at (0,0) {$A$};
  \node[draw, circle, minimum size=0.75cm, inner sep=0pt] (b) at (2.5,0) {$B$};
  \node[draw, circle, minimum size=0.75cm, inner sep=0pt] (x) at (0,-1.7) {$X$};
  \node[draw, circle, minimum size=0.75cm, inner sep=0pt] (y) at (2.5,-1.7) {$Y$};
  \node[draw, circle, minimum size=0.75cm, inner sep=0pt] (lambda) at (1.25,-3.4) {$\Lambda$};

  \draw[->] (a) -- (b);
  \draw[->] (x) -- (a);
  \draw[->] (x) -- (b);
  \draw[->] (y) -- (b);
  \draw[->] (lambda) -- (a);
  \draw[->] (lambda) -- (b);
  
\end{tikzpicture}
\end{minipage}
\hfill
\begin{minipage}{0.48\columnwidth}
\centering
$F'$\\[0.8ex]
\begin{tikzpicture}[causal graph]

  \node[draw, circle, minimum size=0.75cm, inner sep=0pt] (a) at (0,0) {$A'$};
  \node[draw, circle, minimum size=0.75cm, inner sep=0pt] (b) at (2.5,0) {$B'$};
  \node[draw, circle, minimum size=0.75cm, inner sep=0pt] (x) at (0,-1.7) {$X'$};
  \node[draw, circle, minimum size=0.75cm, inner sep=0pt] (y) at (2.5,-1.7) {$Y'$};
  \node[draw, circle, minimum size=0.75cm, inner sep=0pt] (lambda) at (1.25,-3.4) {$\Lambda'$};

  \draw[->] (b) -- (a);
  \draw[->] (x) -- (a);
  \draw[->] (y) -- (b);
  \draw[->] (y) -- (a);
  \draw[->] (lambda) -- (a);
  \draw[->] (lambda) -- (b);

\end{tikzpicture}
\end{minipage}
\caption{The maximally allowed causal structure underlying the Bell experiment compatible with IS and NRC in $F$ and $F'$.}
\end{figure}
LICC states that the skeleton of a causal structure must be invariant under changes of reference frame. This implies that if a random variable $X_i$ is a cause of another variable $X_j$ in some frame then in all frames, either $X_i$ is a cause of $X_j$ or $X_j$ is a cause of $X_i$. 

In $F$, $X$ may be a cause of $B$ while in $F'$ neither $X'$ is a cause of $B$' nor $B'$ of $X'$. Hence, $X$ cannot be a cause of $B$ in any frame. By an analogous argument, $Y$ cannot be a cause of $A$ in any frame. LICC does not rule out causal relations between $A$ and $B$, as $A$ may be a cause of $B$ in $F$, while the reverse is true in $F'$. Hence, LICC rules out nonlocal causal influences from the settings to the outcomes, while allowing for frame-dependent, nonlocal causal connections between the measurement outcomes $A$ and $B$. This leaves us with the following pair of causal structures:
\begin{figure}[H]
\centering
\begin{minipage}{0.48\columnwidth}
\centering
$F$\\[0.8ex]
\begin{tikzpicture}[causal graph]
 
  \node[draw, circle, minimum size=0.75cm, inner sep=0pt] (a) at (0,0) {$A$};
  \node[draw, circle, minimum size=0.75cm, inner sep=0pt] (b) at (2.5,0) {$B$};
  \node[draw, circle, minimum size=0.75cm, inner sep=0pt] (x) at (0,-1.7) {$X$};
  \node[draw, circle, minimum size=0.75cm, inner sep=0pt] (y) at (2.5,-1.7) {$Y$};
  \node[draw, circle, minimum size=0.75cm, inner sep=0pt] (lambda) at (1.25,-3.4) {$\Lambda$};

  \draw[->] (a) -- (b);
  \draw[->] (x) -- (a);
  \draw[->] (y) -- (b);
  \draw[->] (lambda) -- (a);
  \draw[->] (lambda) -- (b);
  
\end{tikzpicture}
\end{minipage}
\hfill
\begin{minipage}{0.48\columnwidth}
\centering
$F'$\\[0.8ex]
\begin{tikzpicture}[causal graph]

  \node[draw, circle, minimum size=0.75cm, inner sep=0pt] (a) at (0,0) {$A'$};
  \node[draw, circle, minimum size=0.75cm, inner sep=0pt] (b) at (2.5,0) {$B'$};
  \node[draw, circle, minimum size=0.75cm, inner sep=0pt] (x) at (0,-1.7) {$X'$};
  \node[draw, circle, minimum size=0.75cm, inner sep=0pt] (y) at (2.5,-1.7) {$Y'$};
  \node[draw, circle, minimum size=0.75cm, inner sep=0pt] (lambda) at (1.25,-3.4) {$\Lambda'$};

  \draw[->] (b) -- (a);
  \draw[->] (x) -- (a);
  \draw[->] (y) -- (b);
  \draw[->] (lambda) -- (a);
  \draw[->] (lambda) -- (b);

\end{tikzpicture}
\end{minipage}
\caption{The maximally allowed causal structure underlying the Bell experiment compatible with IS, NRC, and LICC in $F$ and $F'$.}
\end{figure}
Applying the Markov condition, we find that the operational statistics must take the following form:
\begin{align}
    p(a, b \mid x, y) &=  \int \mathrm{d}\lambda \,p(a \mid x, \lambda)\,p(b \mid a, y, \lambda)\,p(\lambda)\\
    &= \int \mathrm{d}\lambda' \,p'(a' \mid b', x', \lambda')\,p'(b' \mid  y', \lambda')\,p'(\lambda')\,,
\end{align}
That the operational probabilities are identical in both frames is required by Runwise Agreement. For $|A| = |B| = 2$ and $|X| = |Y| = 3$, it has been shown \cite{Chaves_2015, Ringbauer_2016} that both factorizations satisfy the same inequality:
\begin{equation}
   E_{00} - E_{02} - E_{11} + E_{12} - E_{20} + E_{21} \leq 4\,.
\end{equation}
This inequality was experimentally violated up to a value of $5.16 \pm 0.02$, which corresponds to a violation by more than $170$ standard deviations \cite{Ringbauer_2016}. Hence, any causal model satisfying the conjunction of IS, NRC, and LICC is unable to reproduce the operational statistics and is therefore empirically inadequate.
\end{proof}

\section{Conclusion}\label{sec:V}
We have shown that the idea underlying Gisin's argument can be generalized to the genuinely probabilistic case using frame-indexed causal models. Within this framework, we established two no-go theorems according to which no empirically adequate frame-indexed causal model of a Bell experiment can jointly satisfy:
\begin{itemize}
    \item[\ref{theorem:1}] Independent Settings, No-Retrocausality, and Causal Lorentz Invariance;
    \item[\ref{theorem:2}] Independent Settings, No-Retrocausality, and Lorentz Invariance of Causal Connections.
\end{itemize}
Theorem~\ref{theorem:2} strictly strengthens Theorem~\ref{theorem:1}, since LICC is strictly weaker than CLI. Nevertheless, Theorem~\ref{theorem:1} should not be regarded as redundant. It provides the most straightforward extension of Gisin's reasoning to the probabilistic case and offers a natural route to Theorem~\ref{theorem:2}. The reasoning underlying the proof of Theorem~\ref{theorem:1} is also used to help establish an additional no-go result found in the Appendix.

The significance of Theorem~\ref{theorem:2} is that LICC allows the direction of causal influence to depend on the choice of reference frame. LICC requires only the skeleton of the causal structure to be frame-independent. The no-go result therefore excludes a class of causal explanations of the Bell scenario in which nonlocal causal influence is retained but its direction is consistent with the frame-dependent temporal ordering, as enforced by NRC. This is particularly relevant to proposals motivated by the notion of effective causality in stochastic relativistic theories such as rGRWf.

One possible response to Theorem~\ref{theorem:2}, while retaining IS and NRC, is to abandon invariance requirements on causal structure altogether. This raises further questions regarding the physical status of causal structure. If substantially different causal structures are associated with the same experimental situation in different inertial frames, it becomes less clear in what sense the causal structure can be seen as ontic, as for example advocated by Spekkens \cite{Spekkens_2015}.

The additional result presented in the Appendix explores this possibility. There we abandon all invariance assumptions on causal structure and instead impose conditions on the probabilistic relation between distinct frame-indexed causal explanations. If differing causal structures are interpreted as representing distinct mechanisms, then Independent Explanation requires their outcomes to be statistically independent once the respective settings and latent variables have been specified. Together with Runwise Agreement, Independent Explanation forces determinism in order to explain the perfect agreement between frames. This then allows for Gisin's original argument to go through.

A different response is to deny that the frame-indexed causal structures represent distinct physical mechanisms. If causal structure is instead regarded as a useful representation of a more fundamental stochastic process, then there is considerably less motivation for requiring Independent Explanation to hold between different frame-indexed descriptions. Tumulka's position appears to be close to this view. Although his discussion of effective causality can be read as permitting frame-dependent directions of influence, he also emphasizes that a fundamental relativistic theory does not need to specify any direction of influence at all:
\begin{quote}
    Now the claim is that the theory [rGRWf] does not have to specify which of the two ways nature uses and that nature does not have to use either of the two ways. [\dots] Thus, there is no need for a direction of influence; it is enough if the fundamental physical theory prescribes the joint distribution. \cite{Tumulka_2022}
\end{quote}
This way of understanding the role of causality in rGRWf and related theories evades the no-go results presented here by refusing the applicability of causal notions to fundamental physics.

In the case of relativistic extensions of Bohmian mechanics, one does not need to go so far. Relativistic versions of Bohmian mechanics introduce a preferred foliation of spacetime relative to which the dynamics are specified \cite{Duerr_2009, Duerr_2014}. Causal influences between spacelike separated variables that are future-directed relative to the preferred foliation can be retrocausal in other frames. They therefore evade the present no-go results by rejecting NRC. The introduction of such additional spacetime structure is, however, often regarded as being in tension with the spirit of special relativity \cite{Maudlin_2008}.

Finally, the present discussion may also be relevant beyond classical causal models. One powerful motivation for nonclassical causal inference is the possibility of providing causal explanations for correlations that violate Bell locality without violating faithfulness, for example by allowing nonlocal causal influences \cite{Wood_2015, Costa_2016, Allen_2017}. However, recent work by Y\={\i}ng et al. on extended Wigner's friend scenarios shows that under an assumption of Absoluteness of Observed Events, i.e., that there exists a joint probability distribution over all random variables operationally accessible to some observer, even very general classes of nonclassical causal models, such as $d$-separated or compositional causal models, must violate faithfulness \cite{Ying_2024}. This suggests that Gisin-style arguments can be extended beyond the classical causal models framework. 

Taken together, the presented results clarify an aspect of the tension between nonlocality and relativity. They do so by showing that explanations of Bell correlations cannot respect natural invariance assumptions on causal structure while maintaining other reasonable physical principles, such as No-Retrocausality and the freedom of the experimenters to independently choose their measurement settings.

\section*{Acknowledgments}\label{acknowledgments}
I am grateful to Beatrix C. Hiesmayr for supervising my master's thesis, on which this paper is in part based. I also thank Flavio Del Santo and Benjamin Kesetovic for helpful discussions.

\appendix\label{appendix}
\section{Gisin's Argument in the Ontological Models Framework}\label{appendix:A}
In the following, the probabilistic extension of Gisin's argument using ontological models, briefly sketched in Subsection~\ref{subsec:IIIB}, is formulated as a no-go theorem. As mentioned above, this result does not provide a satisfactory generalization of Gisin's argument, since the assumption of PNRC is highly restrictive.

In order to infer Bell locality in this manner, one has to assume that the same statistical independence relations hold in all reference frames:
\begin{definition}[Probabilistic Lorentz Invariance]
    The set of statistical independence relations is identical in all reference frames.
\end{definition}
Using this assumption of \emph{Probabilistic Lorentz Invariance} (PLI), the result can be formulated in the following way:
\begin{theorem}\label{theorem:3}
 There is no empirically adequate, ontological model of the Bell scenario that satisfies:
    \begin{itemize}
        \item[(i)] Measurement Independence,
        \item[(ii)] Probabilistic No-Retrocausality,
        \item[(iii)] Probabilistic Lorentz Invariance.
    \end{itemize}
\end{theorem}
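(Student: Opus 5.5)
The plan is to derive Bell locality from the three assumptions and then use the experimental violation of the CHSH inequality, following the sketch in Subsection~\ref{subsec:IIIB}. We fix two inertial frames $F$ and $F'$. In $F$, Alice's experiment, that is $X$ and then $A$, precedes Bob's. In $F'$ the order is reversed. The latent variable $\Lambda$ is located in the common past of both wings in either frame. We work with the ontological model $p(a,b,\lambda\mid x,y)$ and treat the operational variables as frame-independent, as in Section~\ref{sec:II}.

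\textbf{Step 1 (frame $F$).} Here the full past of $A$ is specified by $X$ and $\Lambda$, and $B$ and $Y$ lie in its future. PNRC therefore gives
\begin{equation*}
p(a\mid b,x,y,\lambda)=p(a\mid x,\lambda).
\end{equation*}
Marginalizing over $b$ yields $p(a\mid x,y,\lambda)=p(a\mid x,\lambda)$, which is PI for Alice.

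\textbf{Step 2 (frame $F'$).} Symmetrically, the past of $B$ is specified by $Y$ and $\Lambda$, and PNRC gives
\begin{equation*}
p(b\mid a,x,y,\lambda)=p(b\mid y,\lambda).
\end{equation*}

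\textbf{Step 3 (transfer between frames).} The independence relation obtained in $F'$ is now a statement about conditional independence among $B$, $A$, $X$ and $\Lambda$ given $Y$. By PLI it also holds in $F$. Hence in a single frame we have both relations, and the chain rule gives
\begin{equation*}
p(a,b\mid x,y,\lambda)=p(a\mid b,x,y,\lambda)\,p(b\mid x,y,\lambda)=p(a\mid x,\lambda)\,p(b\mid y,\lambda).
\end{equation*}
This is OI together with PI.

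\textbf{Step 4 (conclusion).} MI gives $p(\lambda\mid x,y)=p(\lambda)$. Integrating over $\lambda$ then produces the Bell-local form
\begin{equation*}
p(a,b\mid x,y)=\int\mathrm{d}\lambda\,p(a\mid x,\lambda)\,p(b\mid y,\lambda)\,p(\lambda).
\end{equation*}
For binary variables this implies the CHSH inequality, which is experimentally violated, so no such model is empirically adequate.

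The main obstacle is Step 3, which makes the use of PLI rigorous. Statistical independence relations have to be compared across frames, and this is only meaningful once the variables of $F$ and $F'$ are identified. I would first argue this identification for the operational variables via Runwise Agreement. For $\Lambda$, I would either assume $\lambda=\lambda'$, or read PLI as asserting that the conditional independence $B\perp\{A,X\}\mid\{Y,\Lambda\}$ holds for the same hidden variable in both frames. A secondary point is that PNRC requires the conditioning set to be a \emph{full} specification of the past. I would check that in each frame the past of each outcome contains only the local setting and $\Lambda$, which holds because the settings are chosen at spacelike separation.
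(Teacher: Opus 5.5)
Your proposal is correct and follows essentially the same route as the paper: PNRC in $F$ gives $p(a\mid b,x,y,\lambda)=p(a\mid x,\lambda)$, PNRC in $F'$ makes Bob's outcome independent of Alice's setting, PLI carries that relation over to $F$, and the chain-rule decomposition together with MI yields Bell locality and hence the CHSH inequality. The differences are cosmetic: you use the slightly stronger relation $B\perp\{A,X\}\mid\{Y,\Lambda\}$ where the paper needs only its marginal $p'(b'\mid x',y',\lambda')=p'(b'\mid y',\lambda')$, and you spell out the cross-frame identification of variables (including $\Lambda$), which the paper leaves implicit in its appeal to PLI.
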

\begin{proof}
Assume an ontological model that accounts for the observed statistics of the Bell experiment. By the existence of an ontological model, MI and the definition of conditional probability we obtain
\begin{align}
    p(a, b \mid x, y) 
    &= \int \mathrm{d}\lambda\,p(a, b \mid x, y, \lambda)\,p(\lambda \mid x, y)\\
    &= \int \mathrm{d}\lambda\,p(a \mid b, x, y, \lambda)\, p(b \mid x, y, \lambda)\,p(\lambda)\,.
\end{align}
We consider a reference frame $F$ in which Alice performs her experiment before Bob. In this frame, PNRC implies 
\begin{equation}
    p(a \mid b, x, y, \lambda) = p(a \mid x, \lambda)\,,
\end{equation}
because in $F$, $B$ and $Y$ are in the future relative to $A$, while $X$ and $\Lambda$ specify the past of $A$. In $F'$, a frame in which Bob's experiment takes place before Alice's, PNRC implies for analogous reasons that
\begin{equation}
    p'(b' \mid x', y', \lambda') = p'(b' \mid y', \lambda')\,.
\end{equation}
By PLI, the independence relation obtained in $F'$ must also hold in $F$. Hence, we conclude that $p(a, b \mid x, y)$ is Bell-local:
\begin{align}
    p(a, b \mid x, y) 
    &= \int \mathrm{d}\lambda\,p(a \mid b, x, y, \lambda)\, p(b \mid x, y, \lambda)\,p(\lambda)\\
    &= \int \mathrm{d}\lambda\,p(a \mid x, \lambda)\, p(b \mid y, \lambda)\,p(\lambda)\,.
\end{align}
Thus, any ontological model satisfying MI, PNRC, and PLI cannot violate the CHSH inequality and is consequently empirically inadequate. 
\end{proof}

\section{Frame-Indexed Causal Models}\label{appendix:B}
As mentioned in Section~\ref{sec:IV}, the motivation for the framework of frame-indexed causal models comes from studying possibly differing causal explanations of the same physical experiment from different reference frames. The notion of frame-indexed causal models consists of the following elements:
\begin{itemize}
    \item[(i)] A set of reference frames $\mathcal{F} = \{F_1, \dots, F_m\}$.
    \item[(ii)] A set of causal models $\mathcal{C}$, where exactly one causal model $C_j \in \mathcal{C}$ is assigned to each reference frame $F_j \in \mathcal{F}$. Furthermore, all corresponding variables have the same latency status.
    \item[(iii)] A set of spacetime regions $\mathcal{R}$, where for every frame $F_j$ and corresponding causal model $C_j$, the variable $X_i^{F_j}$ is uniquely associated with the region $R_i$, for all $i$.
    \item[(iv)] A joint probability distribution $P$ over all random variables of all causal models, meaning $P = P\left(\mathbf{V^{F_1}}, \dots,\mathbf{V^{F_m}}\right)$. It is assumed that for nonlatent variables, the random variable $X_i^{F_j}$ takes the same value for every run as $X_i^{F_k}$ for all $(i, j, k)$, meaning that $P(X_i^{F_j} = X_i^{F_k}) = 1$.
\end{itemize}
In order to allow for possibly differing causal explanations of the same experimental situation we need to have a causal model associated with each frame under consideration as stipulated in (i) and (ii). But in order for these possibly differing causal explanations to describe the same experimental situation, the corresponding variables of the different causal models must be located in spacetime in accordance with the experiment as ensured by (iii). Furthermore, they must reproduce the same operational values for every run of the experiment, and thereby the same operational statistics, as assumed by (iv). 

One could simply identify all corresponding random variables across different reference frames and thereby work with a single set of random variables rather than separate frame-indexed copies. There are two reasons why we refrain from doing so: Firstly, we only want to impose invariance assumptions which follow from the requirement that the frame-indexed models describe the same physical experiment, such as Runwise Agreement. Secondly, a joint distribution over the frame-indexed variables is required in order to formulate probabilistic relations between variables belonging to different frame-indexed causal models, including Runwise Agreement and Independent Explanation. These assumptions will be vital for the no-go argument given in Appendix~\ref{appendix:C}. 

Now that the elements which comprise the notion of frame-indexed causal models have been introduced and discussed, we are able to give the full definition. We begin by defining \emph{Runwise Agreement} (RA):
\begin{definition}[Runwise Agreement]
Let $C_1$ and $C_2$ be causal models with sets of nonlatent variables $\mathbf{V}^1 = \{X_1^1, \dots, X_n^1\}$ and $\mathbf{V}^2 = \{X_1^2, \dots, X_n^2\}$, respectively, which are jointly distributed according to a probability distribution $P$ over $\mathbf{V}^1 \cup \mathbf{V}^2$. The random variables satisfy
\begin{equation}
P\left(X_i^1 = X_i^2\right) = 1\,,
\end{equation}
for every $i \in \{1,\dots,n\}$.
\end{definition}
RA is substantially stronger than observational equivalence. Observational equivalence requires only that two models reproduce the same probability distribution over the operational variables. RA requires the corresponding variables to take the same values in every individual run. It therefore implies observational equivalence, whereas the converse does not generally hold. Using this notion of RA, frame-indexed causal models can be defined as follows:
\begin{definition}[Frame-Indexed Causal Model]
A \emph{frame-indexed causal model} is a quadruple
\begin{equation}
    \mathfrak{C} = \bigl(\mathcal{F},\mathcal{C},\mathcal{R},P\bigr),
\end{equation}
where $\mathcal{F} = \{F_1,\dots,F_m\}$ is a set of inertial reference frames and $\mathcal{C} = \{C^{F_1}, \dots, C^{F_m}\}$ is a corresponding set of causal models. For each frame $F_j\in\mathcal{F}$, the corresponding causal model
\begin{equation}
C^{F_j} = \bigl(G^{F_j},P^{F_j}\bigr)\,,
\end{equation}
is defined over the frame-indexed set of random variables $\mathbf{V}^{F_j} = \{X_1^{F_j},\ldots,X_n^{F_j}\}$, with $\left|\mathbf{V}^{F_j}\right| = n$ for all frames. Corresponding variables are assumed to have the same latency status in every frame; that is, $X^{F_j}_i$ is latent if and only if $X^{F_k}_i$ is latent for all $(j, k)$. The set $\mathcal{R} = \{R_1,\ldots,R_n\}$ is a set of spacetime regions such that, for every frame $F_j$, the variable $X_i^{F_j}$ is associated with the region $R_i$ for all $i \in \{1, \dots, n\}$. Furthermore, $P = P\bigl(\mathbf{V}^{F_1},\dots,\mathbf{V}^{F_m}\bigr)$ is a joint probability distribution over all frame-indexed random variables. For each frame $F_j$, the distribution $P^{F_j}$ is the marginal of $P$ over the variables in $\mathbf{V}^{F_j}$. Each marginal $P^{F_j}$ is assumed to be Markov relative to $G^{F_j}$. Finally, RA is assumed to hold between all pairs of causal models in $\mathcal{C}$.
\end{definition}
This notion of frame-indexed causal models imposes no constraint on how the causal structure $G^{F_j}$ depends on the choice of reference frame.

\section{Determinism Inferred}\label{appendix:C}
The main part of this paper leaves open the possibility that empirically adequate frame-indexed causal models of the Bell scenario respect NRC and IS while allowing for arbitrary frame-dependence with regard to causal structure. In the following we investigate the possibility of frame-dependent causal explanations more broadly. We will see that taking the idea of frame-dependent explanation seriously imposes a restriction to deterministic causal models of the Bell scenario. A causal model of the Bell scenario is said to be deterministic if the outcomes $A$, $B$ are fixed by the settings $X$, $Y$ and the hidden variable $\Lambda$:
\begin{equation}
    p(a, b \mid x, y, \lambda) = \delta_{(a, b), h(x, y, \lambda)} \in \{0, 1\}\,.
\end{equation}
Recovering determinism enables a direct application of Gisin's original reasoning.

In order to make this argument work, we will assume a principle which states that if in a frame-indexed causal model of the Bell scenario, two causal models do not have the same causal structure, then their respective outcomes must be statistically independent given their respective settings and the latent variables. The motivation behind this assumption, referred to as \emph{Independent Explanation} (IE), becomes most evident when considering the converse situation, namely when two causal models of the same Bell experiment have the same causal structure. In this case, one can argue that the causal models of the experiment describe one and the same mechanism by which the outcomes of the experiments arise from different frames. This mechanism can therefore be taken to be independent of the choice of reference frame. Hence, it would be unreasonable to demand an explanation for the agreement between the outcome of the experiments in different reference frames. 

Now consider the situation in the case of differing causal structures. Because of the difference between causal structures it seems reasonable to assume that the experimental outcomes arise by way of a different mechanism. But if different and hence distinct mechanisms each produce the experimental outcomes, the outcomes should be statistically independent from each other, given their respective inputs. This intuition can be more formally expressed as follows:
\begin{definition}[Independent Explanation]
    If a frame-indexed causal model of a Bell experiment exhibits different causal structures, $G^F \neq G^{F'}$, in different frames $F$ and $F'$, then, conditional on their respective measurement settings and latent variables, their measurement outcomes are statistically independent:
      \begin{align*}
        &P(A, A', B, B' \mid X, X', Y, Y', \Lambda, \Lambda') =\\ &P(A, B \mid X, Y, \Lambda)\,P(A', B' \mid X', Y', \Lambda')\,.
    \end{align*}
\end{definition}
The idea behind the argument presented below is the following. Either the causal structure underlying the Bell experiment is the same in all frames or not. If it is the same, then it is, by the reasoning of Theorem~\ref{theorem:1}, unable to explain violations of the CHSH inequality. If it is different in at least a pair of frames, then we need some explanation why two distinct mechanisms, as demanded by IE, are able to produce the same results in both frames. It will turn out that this is only possible if both causal models are deterministic. But, as mentioned before, for the case of deterministic models, Gisin's original argument applies. As discussed in Section~\ref{sec:II}, Gisin's argument requires \emph{Cross-Frame Measurement Independence}:
\begin{definition}[Cross-Frame Measurement Independence]
 For a frame-indexed causal model of the Bell scenario, the hidden variables $\Lambda$ and $\Lambda'$, associated with frames $F$ and $F'$, are jointly statistically independent of the settings:
    \begin{equation}
        P(\Lambda, \Lambda' \mid X, Y) = P(\Lambda, \Lambda')\,.
    \end{equation}
\end{definition}
This enables us to establish the following no-go result: 
\begin{theorem}\label{theorem:4}
     There is no empirically adequate, frame-indexed causal model of a Bell experiment that satisfies:
    \begin{itemize}
        \item[(i)] Independent Settings,
        \item[(ii)] No-Retrocausality,
        \item[(iii)] Independent Explanation,
        \item[(iv)] Cross-Frame Measurement Independence. 
    \end{itemize}
\end{theorem}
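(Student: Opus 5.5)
The proof splits into two cases, according to whether the causal structure of the frame-indexed model is the same in the two frames $F$ and $F'$ (with Alice first in $F$ and Bob first in $F'$).

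\emph{Case 1: $G^F = G^{F'}$.} Here the model satisfies Causal Lorentz Invariance restricted to the pair $\{F,F'\}$. IS and NRC applied in each frame prune the graphs to the two structures of Figure~\ref{figure:1}. The common graph must lie in their intersection, which is exactly the Bell-local structure used in the proof of Theorem~\ref{theorem:1}. The Markov condition then gives Bell locality, so the CHSH inequality holds and the model is empirically inadequate. We may invoke the reasoning of Theorem~\ref{theorem:1} directly, since that proof only ever used the two frames $F$ and $F'$.

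\emph{Case 2: $G^F \neq G^{F'}$.} Independent Explanation applies. Condition on the joint values $(x,x',y,y',\lambda,\lambda')$ of settings and latents. By Runwise Agreement, $X=X'$ and $Y=Y'$ almost surely, so we may restrict to $x=x'$ and $y=y'$. Also by Runwise Agreement, $P(A=A',B=B')=1$. Hence, for almost every such conditioning value, the conditional joint law of $((A,B),(A',B'))$ is supported on the diagonal. By IE, this law is a product $P(A,B\mid x,y,\lambda)\,P(A',B'\mid x,y,\lambda')$. A product measure on a finite set concentrated on the diagonal must be a point mass: if $\sum_{u} p(u)q(u)=1$ with $p,q$ probability vectors, then $p=q=\delta_{u_0}$ for a single $u_0$. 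So $p(a,b\mid x,y,\lambda)=\delta_{(a,b),h(x,y,\lambda)}$, and likewise for the primed model with some function $h'$, where $h(x,y,\lambda)=h'(x,y,\lambda')$ almost surely. This determinism step is the crux of the argument. The main care needed is measure-theoretic: the conditioning must be on a set of full measure, and one has to pass from the joint event $\{A=A',B=B'\}$ having probability one to the same statement conditionally for almost every value of $(x,y,\lambda,\lambda')$. I would handle this by disintegrating $P$ over the conditioning variables and noting that an event of full measure has conditional probability one almost everywhere.

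Next, NRC in each frame, together with the Markov condition, turns determinism into Gisin's functional form. In $F$, $A$ has parents only among $\{X,\Lambda\}$, so $a=f(x,\lambda)$ almost surely, and $b=g(x,y,\lambda)$. In $F'$, $b'=g'(y,\lambda')$ and $a'=f'(x,y,\lambda')$. Runwise Agreement identifies $b$ with $b'$ runwise, so $a=f(x,\lambda)$ and $b=g'(y,\lambda')$ hold on the same run. Then Cross-Frame Measurement Independence gives
\begin{equation*}
p(a,b\mid x,y)=\int \mathrm{d}\lambda\,\mathrm{d}\lambda'\,\delta_{a,f(x,\lambda)}\,\delta_{b,g'(y,\lambda')}\,p(\lambda,\lambda'),
\end{equation*}
exactly as in Section~\ref{sec:II}. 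This is Bell-local with hidden variable $(\Lambda,\Lambda')$, so CHSH holds, contradicting its experimental violation.

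Both cases therefore yield empirical inadequacy, which proves the theorem.
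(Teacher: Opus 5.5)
Your proposal is correct and follows the paper's proof essentially step for step. It uses the same case split on whether $G^F = G^{F'}$ and the same reduction of the equal-structure case to the reasoning of Theorem~\ref{theorem:1}. It makes the same argument that an IE-factorized conditional law supported on the Runwise Agreement diagonal must be a point mass, which forces determinism, and then runs Gisin's deterministic argument with Cross-Frame Measurement Independence to get a Bell-local model with hidden variable $(\Lambda,\Lambda')$. Your explicit use of the local Markov condition to turn NRC into $a=f(x,\lambda)$ and $b'=g'(y',\lambda')$, and your measure-theoretic remarks, only spell out steps the paper leaves implicit.
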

\begin{proof}
As before, by assuming IS and NRC and considering the causal structure in $F$ and $F'$ we obtain the following, frame-dependent pair of causal structures:
\begin{figure}[H]
\centering
\begin{minipage}{0.48\columnwidth}
\centering
$F$\\[0.8ex]
\begin{tikzpicture}[causal graph]
 
  \node[draw, circle, minimum size=0.75cm, inner sep=0pt] (a) at (0,0) {$A$};
  \node[draw, circle, minimum size=0.75cm, inner sep=0pt] (b) at (2.5,0) {$B$};
  \node[draw, circle, minimum size=0.75cm, inner sep=0pt] (x) at (0,-1.7) {$X$};
  \node[draw, circle, minimum size=0.75cm, inner sep=0pt] (y) at (2.5,-1.7) {$Y$};
  \node[draw, circle, minimum size=0.75cm, inner sep=0pt] (lambda) at (1.25,-3.4) {$\Lambda$};

  \draw[->] (a) -- (b);
  \draw[->] (x) -- (a);
  \draw[->] (x) -- (b);
  \draw[->] (y) -- (b);
  \draw[->] (lambda) -- (a);
  \draw[->] (lambda) -- (b);
  
\end{tikzpicture}
\end{minipage}
\hfill
\begin{minipage}{0.48\columnwidth}
\centering
$F'$\\[0.8ex]
\begin{tikzpicture}[causal graph]

  \node[draw, circle, minimum size=0.75cm, inner sep=0pt] (a) at (0,0) {$A'$};
  \node[draw, circle, minimum size=0.75cm, inner sep=0pt] (b) at (2.5,0) {$B'$};
  \node[draw, circle, minimum size=0.75cm, inner sep=0pt] (x) at (0,-1.7) {$X'$};
  \node[draw, circle, minimum size=0.75cm, inner sep=0pt] (y) at (2.5,-1.7) {$Y'$};
  \node[draw, circle, minimum size=0.75cm, inner sep=0pt] (lambda) at (1.25,-3.4) {$\Lambda'$};

  \draw[->] (b) -- (a);
  \draw[->] (x) -- (a);
  \draw[->] (y) -- (b);
  \draw[->] (y) -- (a);
  \draw[->] (lambda) -- (a);
  \draw[->] (lambda) -- (b);

\end{tikzpicture}
\end{minipage}
\caption{The maximally allowed causal structure underlying the Bell experiment compatible with IS and NRC in $F$ and $F'$.}
\end{figure}
The arrows drawn in the figure above are those which are allowed by IS and NRC, but none of the arrows are enforced by the current assumptions. Hence, the actual causal structure underlying the Bell experiment may have fewer arrows than depicted here. In any case, either the causal structure is the same in both frames or it differs between frames. If it is the same in both frames, then the reasoning used in the proof of Theorem~\ref{theorem:1} applies, and the resulting causal model is empirically inadequate. We therefore only have to consider the case $G^F \neq G^{F'}$. By IE we have that
\begin{equation}
    P(U, U' \mid V, V') = P(U \mid V)\,P(U' \mid V')\,.
\end{equation}
where $U = (A, B)$, $V = (X, Y, \Lambda)$ and analogously for $U'$ and $V'$.
By RA: 
\begin{equation}
    P(U = U') = 1\,,
\end{equation}
as $U$ and $U'$ are nonlatent. Hence, for almost all $(v, v')$
\begin{equation}
    P(U = U' \mid v, v') = 1\,.
\end{equation}
Therefore,
\begin{align}
    1 &= P(U = U' \mid v, v')\\
    &= \sum_u  P(U = u, U'= u \mid v, v')\\
    &= \sum_u P(U = u\mid v)\,P(U' = u\mid v')\,.
\end{align}
Since both factors are normalized conditional probability distributions,
\begin{equation}
    \sum_u P(U=u\mid v)P(U'=u\mid v')
    \leq
    \max_u P(U'=u\mid v')
    \leq 1\,,
\end{equation}
with equality only if there exists some $u_0$ such that
\begin{equation}
    P(U'=u_0\mid v') = 1\,.
\end{equation}
This further implies that
\begin{equation}
    P(U = u_0 \mid v) = 1\,.
\end{equation}
Thus both causal models are deterministic, meaning that
\begin{equation}
    p(a, b \mid x, y, \lambda) = \delta_{a, f(x, y, \lambda)}\,\delta_{b, g(x, y, \lambda)} \in \{0, 1\}\,,
\end{equation}
almost surely and analogously for $p'(a', b' \mid x', y', \lambda')$. This allows us to run Gisin's original argument as outlined in Section~\ref{sec:II}. In $F$ we have 
\begin{equation}
     a = f(x, y, \lambda) = f(x, \lambda)\,,
\end{equation}
while in $F'$ we find that
\begin{equation}
     b' = g'(x', y', \lambda') = g'(y', \lambda')\,,
\end{equation}
Hence, by RA, there exists the following local model of the Bell experiment
\begin{align}
    a &= f(x, \lambda)\,,\\
    b &= g'(y, \lambda')\,.
\end{align}
The operational statistics $p(a, b \mid x, y)$ can then, using Cross-Frame Measurement Independence, be expressed in the following way:
\begin{align}
    p(a, b \mid x, y) &= \int \mathrm{d}\lambda\,\mathrm{d}\lambda' \,\delta_{a, f(x, \lambda)}\,\delta_{b, g'(y, \lambda')}\,p(\lambda, \lambda' \mid x, y)\\
    &= \int \mathrm{d}\lambda\,\mathrm{d}\lambda' \,\delta_{a, f(x, \lambda)}\,\delta_{b, g'(y, \lambda')}\,p(\lambda, \lambda')\,.
\end{align}
This behavior is Bell-local and therefore unable to violate the CHSH inequality as observed in experiments, making the causal model empirically inadequate. 
\end{proof}
Unlike Theorems~\ref{theorem:1} and~\ref{theorem:2}, Theorem~\ref{theorem:4} does not impose any invariance assumption on causal structure. It instead constrains the probabilistic coordination between frames with differing causal structures.

The plausibility of IE seems to hinge on the physical status of causal structure. If causal structure is taken to be ontic, then differing causal structures naturally suggest the existence of distinct mechanisms by which the operational outcomes are generated. However, if different frame-indexed causal structures are understood merely as alternative representations of a single underlying stochastic process, then Independent Explanation seems drastically less plausible. 

\bibliography{references}

@article{Gisin_2011,
  author  = {Gisin, Nicolas},
  title   = {Impossibility of Covariant Deterministic Nonlocal Hidden-Variable Extensions of Quantum Theory},
  journal = {Physical Review A},
  volume  = {83},
  number  = {2},
  pages   = {020102},
  year    = {2011},
  doi     = {10.1103/PhysRevA.83.020102}
}

@article{Esfeld_2013,
  author  = {Esfeld, Michael and Gisin, Nicolas},
  title   = {The {GRW} Flash Theory: A Relativistic Quantum Ontology of Matter in Space-Time?},
  journal = {Philosophy of Science},
  volume  = {81},
  number  = {2},
  pages   = {248--264},
  year    = {2014},
  doi     = {10.1086/675730},
  eprint  = {1310.5308},
  archivePrefix = {arXiv},
  primaryClass  = {quant-ph}
}

@incollection{Levy_2020,
  author    = {Levy, Avi and Hemmo, Meir},
  title     = {Why a Relativistic Quantum Mechanical World Must Be Indeterministic},
  booktitle = {Quantum, Probability, Logic: The Work and Influence of Itamar Pitowsky},
  editor    = {Hemmo, Meir and Shenker, Orly},
  series    = {Jerusalem Studies in Philosophy and History of Science},
  pages     = {423--447},
  publisher = {Springer},
  address   = {Cham},
  year      = {2020},
  doi       = {10.1007/978-3-030-34316-3_19}
}

@article{Bell_1988,
  author  = {Mann, Charles and Crease, Robert P.},
  title   = {Interview: John Bell},
  journal = {Omni},
  volume  = {10},
  number  = {8},
  pages   = {84--86, 88, 90, 92, 121},
  month   = may,
  year    = {1988}
}

@book{Maudlin_2011,
  author    = {Maudlin, Tim},
  title     = {Quantum Non-Locality and Relativity: Metaphysical Intimations of Modern Physics},
  edition   = {3},
  publisher = {Wiley-Blackwell},
  year      = {2011},
  doi       = {10.1002/9781444396973},
  isbn      = {9781444331264}
}

@incollection{Maudlin_2008,
  author    = {Maudlin, Tim},
  title     = {Non-Local Correlations in Quantum Theory: How the Trick Might Be Done},
  booktitle = {Einstein, Relativity and Absolute Simultaneity},
  editor    = {Craig, William Lane and Smith, Quentin},
  pages     = {156--179},
  publisher = {Routledge},
  year      = {2008}
}

@article{Ghirardi_2010,
  author  = {Ghirardi, GianCarlo},
  title   = {Does Quantum Nonlocality Irremediably Conflict with Special Relativity?},
  journal = {Foundations of Physics},
  volume  = {40},
  number  = {9--10},
  pages   = {1379--1395},
  year    = {2010},
  doi     = {10.1007/s10701-009-9391-9}
}

@incollection{Myrvold_2021,
  author    = {Myrvold, Wayne C.},
  title     = {Relativistic Constraints on Interpretations of Quantum Mechanics},
  booktitle = {The Routledge Companion to Philosophy of Physics},
  editor    = {Knox, Eleanor and Wilson, Alastair},
  pages     = {99--121},
  publisher = {Routledge},
  year      = {2021},
  eprint    = {2107.02089},
  archivePrefix = {arXiv},
  primaryClass  = {quant-ph}
}

@article{Chaves_2015,
  author  = {Chaves, Rafael and Kueng, Richard and Brask, Jonatan Bohr and Gross, David},
  title   = {Unifying Framework for Relaxations of the Causal Assumptions in Bell's Theorem},
  journal = {Physical Review Letters},
  volume  = {114},
  number  = {14},
  pages   = {140403},
  year    = {2015},
  month   = apr,
  doi     = {10.1103/PhysRevLett.114.140403}
}

@article{Ringbauer_2016,
  author  = {Ringbauer, Martin and Giarmatzi, Christina and Chaves, Rafael and Costa, Fabio and White, Andrew G. and Fedrizzi, Alessandro},
  title   = {Experimental Test of Nonlocal Causality},
  journal = {Science Advances},
  volume  = {2},
  number  = {8},
  pages   = {e1600162},
  year    = {2016},
  month   = aug,
  doi     = {10.1126/sciadv.1600162}
}

@article{Clauser_1969,
  author  = {Clauser, John F. and Horne, Michael A. and Shimony, Abner and Holt, Richard A.},
  title   = {Proposed Experiment to Test Local Hidden-Variable Theories},
  journal = {Physical Review Letters},
  volume  = {23},
  number  = {15},
  pages   = {880--884},
  year    = {1969},
  month   = oct,
  doi     = {10.1103/PhysRevLett.23.880}
}

@article{Tsirelson_1980,
  author  = {Cirel'son, Boris S.},
  title   = {Quantum Generalizations of {Bell}'s Inequality},
  journal = {Letters in Mathematical Physics},
  volume  = {4},
  number  = {2},
  pages   = {93--100},
  year    = {1980},
  month   = mar,
  doi     = {10.1007/BF00417500}
}

@article{Levy_2025,
  author  = {Levy, Avi and Hemmo, Meir},
  title   = {Locality and Probability in Relativistic Quantum Theories and Hidden Variables Quantum Theories},
  journal = {Foundations of Physics},
  volume  = {55},
  number  = {5},
  pages   = {73},
  year    = {2025},
  doi     = {10.1007/s10701-025-00885-8},
  eprint  = {2503.12967},
  archivePrefix = {arXiv},
  primaryClass  = {quant-ph}
}

@article{Giustina_2015,
  author = {Giustina, Marissa and Versteegh, Marijn A. M. and Wengerowsky, S{\"o}ren
            and Handsteiner, Johannes and Hochrainer, Armin and Phelan, Kevin
            and Steinlechner, Fabian and Kofler, Johannes and Larsson, Jan-{\AA}ke
            and Abell{\'a}n, Carlos and Amaya, Waldimar and Pruneri, Valerio
            and Mitchell, Morgan W. and Beyer, J{\"o}rn and Gerrits, Thomas
            and Lita, Adriana E. and Shalm, Lynden K. and Nam, Sae Woo
            and Scheidl, Thomas and Ursin, Rupert and Wittmann, Bernhard
            and Zeilinger, Anton},
  title   = {Significant-Loophole-Free Test of {Bell}'s Theorem with Entangled Photons},
  journal = {Physical Review Letters},
  volume  = {115},
  number  = {25},
  pages   = {250401},
  year    = {2015},
  month   = dec,
  doi     = {10.1103/PhysRevLett.115.250401}
}

@article{Jarrett_1984,
  author  = {Jarrett, Jon P.},
  title   = {On the Physical Significance of the Locality Conditions in the {Bell} Arguments},
  journal = {No{\^u}s},
  volume  = {18},
  number  = {4},
  pages   = {569--589},
  year    = {1984},
  doi     = {10.2307/2214878}
}

@book{Scarani_2019,
  author    = {Scarani, Valerio},
  title     = {Bell Nonlocality},
  publisher = {Oxford University Press},
  year      = {2019},
  doi       = {10.1093/oso/9780198788416.001.0001},
  isbn      = {9780198788416}
}

@article{Tumulka_2006,
  author  = {Tumulka, Roderich},
  title   = {A Relativistic Version of the {Ghirardi--Rimini--Weber} Model},
  journal = {Journal of Statistical Physics},
  volume  = {125},
  number  = {4},
  pages   = {821--840},
  year    = {2006},
  doi     = {10.1007/s10955-006-9227-3}
}

@article{Tumulka_2007,
  author  = {Tumulka, Roderich},
  title   = {Comment on ``The Free Will Theorem''},
  journal = {Foundations of Physics},
  volume  = {37},
  number  = {2},
  pages   = {186--197},
  year    = {2007},
  doi     = {10.1007/s10701-006-9098-0}
}

@article{Allori_2022,
  author  = {Allori, Valia},
  title   = {What Is It Like to Be a Relativistic {GRW} Theory? Or: Quantum Mechanics and Relativity, Still in Conflict After All These Years},
  journal = {Foundations of Physics},
  volume  = {52},
  number  = {4},
  pages   = {79},
  year    = {2022},
  doi     = {10.1007/s10701-022-00595-5}
}

@article{Duerr_2014,
  author  = {D{\"u}rr, Detlef and Goldstein, Sheldon and Norsen, Travis and Struyve, Ward and Zangh{\`i}, Nino},
  title   = {Can {Bohmian} Mechanics Be Made Relativistic?},
  journal = {Proceedings of the Royal Society A: Mathematical, Physical and Engineering Sciences},
  volume  = {470},
  number  = {2162},
  pages   = {20130699},
  year    = {2014},
  doi     = {10.1098/rspa.2013.0699},
  eprint  = {1307.1714},
  archivePrefix = {arXiv},
  primaryClass  = {quant-ph}
}

@book{Duerr_2009,
  author    = {D{\"u}rr, Detlef and Teufel, Stefan},
  title     = {Bohmian Mechanics: The Physics and Mathematics of Quantum Theory},
  publisher = {Springer},
  address   = {Berlin, Heidelberg},
  year      = {2009},
  doi       = {10.1007/b99978},
  isbn      = {978-3-540-89343-1}
}

@misc{Gisin_2010,
  author        = {Gisin, Nicolas},
  title         = {The Free Will Theorem, Stochastic Quantum Dynamics and
                   True Becoming in Relativistic Quantum Physics},
  year          = {2010},
  eprint        = {1002.1392},
  archiveprefix = {arXiv},
  primaryclass  = {quant-ph},
  doi           = {10.48550/arXiv.1002.1392},
  url           = {https://arxiv.org/abs/1002.1392}
}

@incollection{Spekkens_2015,
  author    = {Spekkens, Robert W.},
  title     = {The Paradigm of Kinematics and Dynamics Must Yield to Causal Structure},
  booktitle = {Questioning the Foundations of Physics: Which of Our Fundamental Assumptions Are Wrong?},
  editor    = {Aguirre, Anthony and Foster, Brendan and Merali, Zeeya},
  series    = {The Frontiers Collection},
  pages     = {5--16},
  publisher = {Springer},
  address   = {Cham},
  year      = {2015},
  doi       = {10.1007/978-3-319-13045-3_2},
  eprint    = {1209.0023},
  archivePrefix = {arXiv},
  primaryClass  = {quant-ph}
}

@article{Vilasini_2022,
  author  = {Vilasini, V. and Colbeck, Roger},
  title   = {General framework for cyclic and fine-tuned causal models and their compatibility with space-time},
  journal = {Physical Review A},
  volume  = {106},
  pages   = {032204},
  year    = {2022},
  doi     = {10.1103/PhysRevA.106.032204},
  eprint  = {2109.12128},
  archivePrefix = {arXiv},
  primaryClass  = {quant-ph}
}

@article{Wood_2015,
  author  = {Wood, Christopher J. and Spekkens, Robert W.},
  title   = {The Lesson of Causal Discovery Algorithms for Quantum Correlations: Causal Explanations of {Bell}-Inequality Violations Require Fine-Tuning},
  journal = {New Journal of Physics},
  volume  = {17},
  number  = {3},
  pages   = {033002},
  year    = {2015},
  doi     = {10.1088/1367-2630/17/3/033002},
  eprint  = {1208.4119},
  archivePrefix = {arXiv},
  primaryClass  = {quant-ph}
}

@article{Ying_2024,
  author  = {Yīng, Y{\`i}l{\`e} and Ansanelli, Marina Maciel and Di Biagio, Andrea and Wolfe, Elie and Schmid, David and Cavalcanti, Eric Gama},
  title   = {Relating {Wigner}'s Friend Scenarios to Nonclassical Causal Compatibility, Monogamy Relations, and Fine Tuning},
  journal = {Quantum},
  volume  = {8},
  pages   = {1485},
  year    = {2024},
  month   = sep,
  doi     = {10.22331/q-2024-09-26-1485},
  eprint  = {2309.12987},
  archivePrefix = {arXiv},
  primaryClass  = {quant-ph}
}

@article{Harrigan_2010,
  author  = {Harrigan, Nicholas and Spekkens, Robert W.},
  title   = {Einstein, Incompleteness, and the Epistemic View of Quantum States},
  journal = {Foundations of Physics},
  volume  = {40},
  number  = {2},
  pages   = {125--157},
  year    = {2010},
  doi     = {10.1007/s10701-009-9347-0},
  eprint  = {0706.2661},
  archivePrefix = {arXiv},
  primaryClass  = {quant-ph}
}

@article{Allen_2017,
  author  = {Allen, John-Mark A. and Barrett, Jonathan and Horsman, Dominic C. and Lee, Ciar{\'a}n M. and Spekkens, Robert W.},
  title   = {Quantum Common Causes and Quantum Causal Models},
  journal = {Physical Review X},
  volume  = {7},
  number  = {3},
  pages   = {031021},
  year    = {2017},
  doi     = {10.1103/PhysRevX.7.031021}
}

@article{Costa_2016,
  author  = {Costa, Fabio and Shrapnel, Sally},
  title   = {Quantum Causal Modelling},
  journal = {New Journal of Physics},
  volume  = {18},
  number  = {6},
  pages   = {063032},
  year    = {2016},
  doi     = {10.1088/1367-2630/18/6/063032}
}

@article{Leifer_2014,
  author  = {Leifer, Matthew S.},
  title   = {Is the Quantum State Real? An Extended Review of {$\psi$}-Ontology Theorems},
  journal = {Quanta},
  volume  = {3},
  number  = {1},
  pages   = {67--155},
  year    = {2014},
  doi     = {10.12743/quanta.v3i1.22},
  eprint  = {1409.1570},
  archivePrefix = {arXiv},
  primaryClass  = {quant-ph}
}

@article{Leifer_2017,
  author  = {Leifer, Matthew S. and Pusey, Matthew F.},
  title   = {Is a Time Symmetric Interpretation of Quantum Theory Possible Without Retrocausality?},
  journal = {Proceedings of the Royal Society A: Mathematical, Physical and Engineering Sciences},
  volume  = {473},
  number  = {2202},
  pages   = {20160607},
  year    = {2017},
  month   = jun,
  doi     = {10.1098/rspa.2016.0607}
}

@article{Conway_2006,
  author  = {Conway, John H. and Kochen, Simon},
  title   = {The Free Will Theorem},
  journal = {Foundations of Physics},
  volume  = {36},
  number  = {10},
  pages   = {1441--1473},
  year    = {2006},
  doi     = {10.1007/s10701-006-9068-6},
  eprint  = {quant-ph/0604079},
  archivePrefix = {arXiv}
}

@article{Goldstein_2010,
  author  = {Goldstein, Sheldon and Tausk, Daniel V. and Tumulka, Roderich and Zangh{\`i}, Nino},
  title   = {What Does the Free Will Theorem Actually Prove?},
  journal = {Notices of the American Mathematical Society},
  volume  = {57},
  number  = {11},
  pages   = {1451--1453},
  year    = {2010},
  eprint  = {0905.4641},
  archivePrefix = {arXiv},
  primaryClass  = {quant-ph}
}

@article{Vieira_2025,
  author  = {Vieira, Carlos and Ramanathan, Ravishankar and Cabello, Ad{\'a}n},
  title   = {Test of the Physical Significance of {Bell} Non-Locality},
  journal = {Nature Communications},
  volume  = {16},
  pages   = {4390},
  year    = {2025},
  month   = may,
  doi     = {10.1038/s41467-025-59247-7},
  issn    = {2041-1723}
}

@article{Tumulka_2009,
  author  = {Tumulka, Roderich},
  title   = {The Point Processes of the {GRW} Theory of Wave Function Collapse},
  journal = {Reviews in Mathematical Physics},
  volume  = {21},
  number  = {2},
  pages   = {155--227},
  year    = {2009},
  month   = mar,
  doi     = {10.1142/S0129055X09003608},
  eprint  = {0711.0035},
  archivePrefix = {arXiv},
  primaryClass  = {math-ph}
}

@book{Tumulka_2022,
  author    = {Tumulka, Roderich},
  title     = {Foundations of Quantum Mechanics},
  series    = {Lecture Notes in Physics},
  volume    = {1003},
  publisher = {Springer},
  address   = {Cham},
  year      = {2022},
  doi       = {10.1007/978-3-031-09548-1},
  isbn      = {978-3-031-09547-4}
}
\end{document}